\newtheorem{proposition}{Proposition}
\newcommand{\cube}[1]{\mathbb{F}_2^{#1}}
\newcommand{\R}{\mathbb{R}}
\newcommand{\1}{\mathbbm{1}}
\DeclareMathOperator{\supp}{supp}
\renewcommand\em{\it}
\title{An Elementary Proof
of the First LP Bound\\on the Rate of Binary Codes}
\author{Nati Linial\thanks{School of Computer Science and Engineering, Hebrew University, Jerusalem 91904, Israel. e-mail: nati@cs.huji.ac.il. Supported in part by a NSF-BSF research grant "Global Geometry of Graphs".}
\and{Elyassaf Loyfer\thanks{School of Computer Science and Engineering, Hebrew University, Jerusalem 91904, Israel. e-mail: elyassaf.loyfer@mail.huji.ac.il.}}}
\begin{document}

\maketitle

\begin{abstract}
    The asymptotic rate vs.\ distance problem is a long-standing fundamental problem in coding theory.
    The best upper bound to date 
    was given in 1977, and has received since then
    numerous proofs and interpretations.
    Here we provide a new, elementary proof of
    this bound that is
    based on counting walks in the Hamming cube.
\end{abstract}

\section{Introduction}

A binary code of length $n$ is a subset $C \subset \cube{n}$. 
Its distance is the smallest Hamming distance
among all pairs of code words, 
$dist(C) \coloneqq \min_{x\neq y \in C} |x+y|$.
A fundamental problem in coding theory is to find the largest possible size, $A(n,d)$, of
a code of length $n$ and distance $d$:
\[
    A(n,d) \coloneqq \max \{|C| : C\subset\cube{n},~dist(C) \geq d\} 
\]
The asymptotic version of this question, the {\em rate vs.\ distance problem},
seeks the largest possible rate $R(C)\coloneqq n^{-1}\log_n(|C|)$, where
the distance is linear in $n$:
\[
    \mathcal{R}(\delta) \coloneqq
    \limsup_{n\to\infty} 
    \{ n^{-1} \log_2 \left( A(n,\lfloor \delta n \rfloor) \right) \}
\]
It is known that $\mathcal{R}(0) = 1$ and $\mathcal{R}(\delta) = 0$
when $\delta \geq 1/2$, and $\mathcal{R}(\delta) > 0$ when $\delta < 1/2$. 
The value of $\mathcal{R}(\delta)$ remains unknown for all $\delta\in(0,1/2)$.

The best and only lower bound, 
$\mathcal{R}(\delta) \geq 1-H(\delta)$, 
is due to Gilbert \cite{gilbert1952comparison} 
and Varshamov \cite{varshamov1957estimate}.
Here, $H(\cdot)$ is the binary entropy function.
The best upper bounds that we have on $\mathcal{R}(\delta)$
are the first and second linear programming (LP) bounds. These
are due to McEliece, Rodemich,
Rumsey and Welch (MRRW, \cite{mceliece1977new}).
The first 
is the better of the two when $\delta \in (0.273,1/2)$,
\begin{equation}
    \mathcal{R}(\delta)\leq H(1/2 - \sqrt{\delta(1-\delta)})
    \label{eq:first_lp_bound}
\end{equation}
The second one,
whose statement is more involved dominates when $\delta\in(0,0.273)$.

These bounds are 
based on a linear program due to Delsarte \cite{delsarte1973algebraic}, 
the optimum of which is an upper bound on $A(n,d)$. 
Although this maximum is still unknown, 
upper bounds on $A(n,d)$ can be derived by passing to the LP dual and 
constructing a dual feasible solution.

In the this paper we present a new, elementary proof 
of \eqref{eq:first_lp_bound}, the first LP bound.

\section{Preliminaries}

As usual, the Hamming cube is the graph whose vertex set is $\cube{n}$, 
where adjacency means having Hamming distance one. The $j$-th
{\em level} of the cube, for $0\leq j\leq n$, is the set of vertices of Hamming weight $j$.

Let $f,g:\cube{n}\to\R$ be real functions on the cube.
We define their inner product w.r.t.\ the uniform measure,
$\langle f,g\rangle \coloneqq 2^{-n} \sum_{x\in\cube{n}}f(x)g(x)$.

To every $x\in\cube{n}$ there corresponds a {\em Fourier character} $\chi_x:\cube{n}\to\R$ that
is defined via $\chi_x(y) = (-1)^{\langle x,y\rangle}$. The 
set of characters $\{\chi_x\}_{x\in\cube{n}}$ forms an
orthonormal basis of the space of real functions on the 
Hamming cube.
The Fourier transform of $f$ is defined by
\[
    \widehat{f}(x) \coloneqq \langle \chi_x, f\rangle 
    = 2^{-n}\sum_{y\in\cube{n}} (-1)^{\langle x,y\rangle} f(y)
\]
In Fourier space, inner product is defined without normalization:
$\langle \widehat{f},\widehat{g}\rangle \coloneqq \sum_{x\in\cube{n}}
\widehat{f}(x)\widehat{g}(x)$.
Parseval's theorem states that 
$\langle f,g\rangle = \langle \widehat{f},\widehat{g}\rangle$.
Convolution between $f$ and $g$ is defined by
\[
    (f*g)(x) \coloneqq 2^{-n}\sum_{y\in\cube{n}} f(y)g(x+y)
\]
By the convolution theorem,
$
    \widehat{f* g} = \widehat{f} \cdot \widehat{g}
$.

\section{Upper Bound on $A(n,d)$ through Delsarte's LP}
\label{section:delsarte_lp}

In this section we recall the dual of Delsarte’s LP for binary codes and
describe a general method to construct feasible dual solutions.
Delsarte's original work was formulated in the context of association schemes
and generalizes to other metric spaces. 
For binary codes with the Hamming distance, the LP can be developed 
completely through Fourier analysis.

Consider a binary code $C\subset\cube{n}$
of length $n$ and minimal distance $d$. 
Let $\1_C:\cube{n}\to\R$ be its 
indicator function, and let
$f_C \coloneqq \frac{2^n}{|C|}\1_C * \1_C$.
The following properties of $f_C$ are
easy to verify:
\[
    f_C(0)=1;~ 
    f_C\geq 0;~ 
    f_C(x) = 0 ~ \text{if}~ 1\leq |x|\leq d-1; ~
    \text{and}~ \widehat{f}_C\geq 0.
\] The last property
follows from the convolution theorem. 
The sum of $f_C$ over the entire cube gives the 
cardinality of $C$. This yields {\em Delsarte's LP}.
\begin{equation}   
\begin{split}
\max 2^n\langle f,\1\rangle~\text{over all}~f:\cube{n} \to\mathbb{R}\text{~with~}\\  
f(0)=1,~ f\ge 0,~ \widehat{f}\ge 0,~ f(x) = 0\text{~if~} 1\leq |x|\leq d-1
\end{split}
\end{equation}
Consequently, the maximum of this LP is an upper bound on $A(n,d)$.

\hfill

Passing to the LP dual, we say that a function $g:\cube{n}\to\R$
is {\em feasible} if
\begin{equation}
\widehat{g} \geq 0,~\widehat{g}(0)>0,~ g(x) \leq 0~\text{if}~|x|\geq d
\label{eq:dual_feasible}
\end{equation}
If $g$ is feasible, then $g(0)/\widehat{g}(0)$ gives an upper bound on $|C|$,
\[
    \widehat{g}(0)|C|
    = 2^n \widehat{g}(0) \widehat{f}_C(0)
    \leq 2^n \langle \widehat{g},\widehat{f}_C\rangle
    = 2^n \langle f_C,g \rangle
    \leq f_C(0) g(0) = g(0)
\]
Explanation: for the first equality, $2^n\widehat{f}_C(0)= \sum_{x\in\cube{n}}f_C(x)=|C|$.
The first inequality holds because $\widehat{f}_C,\widehat{g}\geq 0$.
The subsequent equality follows from Parseval's identity.
For the second inequality, note that $f_C$ vanishes when $1\leq |x|\leq d-1$,
while $f_C(x)\geq 0$ and $g(x) \leq 0$ when $|x|\geq d$. 
Finally, as noted above $f_C(0)=1$.
This relation applies to all codes, and hence
\begin{equation}
A(n,d) \leq  g(0)/\widehat{g}(0) \text{~for every feasible~}g:\cube{n}\to\R
\label{eq:delsarte_dual}
\end{equation}

Note also that to turn the minimization 
problem in \eqref{eq:delsarte_dual}
into an honest LP we can just require that $\widehat{g}(0) = 1$.

\hfill

We describe next a general recipe that yields a feasible solution to \eqref{eq:delsarte_dual}. 
Take $g\coloneqq \phi \cdot \Gamma^2$, where
\begin{enumerate}
\item \label{condition_phi}$\phi:\cube{n}\to\R$ satisfies $\phi(x)\le 0$
    whenever $|x|\geq d$ and $\phi(0) > 0$, and
\item \label{condition_f} The nonzero function $\Gamma$ satisfies
    \[
        \widehat{\Gamma} \geq 0,~\widehat{\phi} * \widehat{\Gamma} \geq \widehat{\Gamma}
    \]
\end{enumerate}
Indeed, such $g$ is non-positive on levels $d,d+1,\dots,n$ of the cube, 
and, by the convolution theorem,
$\widehat{g}\geq \widehat{\Gamma} * \widehat{\Gamma} \geq 0$. 
In particular, $\widehat{g}(0) \geq \lVert \widehat{\Gamma} \rVert_2^2 > 0$.
We derive an upper bound on $A(n,d)$ in terms of the support size of $\widehat{\Gamma}$.
\begin{equation}
    \frac{g(0)}{\widehat{g}(0)} 
    \leq \phi(0) \frac{\Gamma^2(0)}{(\widehat{\Gamma}*\widehat{\Gamma}) (0)}
    = \phi(0) \frac{\langle \widehat{\Gamma}, \1_{\supp(\widehat{\Gamma})} \rangle^2}
    {\langle\widehat{\Gamma},\widehat{\Gamma}\rangle}
    \leq \phi(0)\cdot|\supp(\widehat{\Gamma})|
    \label{eq:value_bound}
\end{equation}
The last inequality follows from Cauchy-Schwartz.

Note that this method involves only $\widehat{\Gamma}$ with no direct reference to $\Gamma$.

This approach was spelled out in \cite{navon2005delsarte}. 
In fact, all constructions known to us, 
\cite{mceliece1977new, friedman2005generalized, navon2005delsarte, navon2009linear, samorodnitsky2021one, barg2006spectral, barg2008functional} 
follow this general pattern with $\phi(x) = 2(d-|x|)$.

\section{The Proof}

In this section we present our proof, which is cast in the pattern
described in the previous section.

Let $m,r$ be positive integers, and define
\begin{align}
    \phi(x) = \phi_m(x) &\coloneqq (n-2|x|)^m - (n-2d)^m
    \label{eq:our_sol_phi}
    \\
    \widehat{\Gamma} = \widehat{\Gamma}_r &\coloneqq L_{r} + L_{r-1}
    \label{eq:our_sol_f}
\end{align}
where $L_j$ is the indicator of the $j$-th level in
the Hamming cube,
\begin{equation}
    L_j(x)\coloneqq \1_{[|x|=j]}
    \label{eq:level_set_indicator}
\end{equation}

We establish the upper bound on $\mathcal{R}(\delta)$
by showing that $g=\phi_m \cdot \Gamma_r^2$ 
is feasible for suitable parameters $r,m$, 
and then optimizing over $r$
to minimize the value $g(0)/\widehat{g}(0)$.

\begin{proposition}\label{prop:feasible_solution}
    The function $g \coloneqq \phi_m \cdot \Gamma_r^2$ is a feasible
    solution to \eqref{eq:delsarte_dual}, provided that 
    $|r-n/2| \leq \sqrt{d(n-d)} - o(n)$,
    and $\omega(1)\leq m\leq o(n)$ is odd. 
\end{proposition}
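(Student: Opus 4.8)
The plan is to verify, for the stated parameters, the two conditions of the general recipe of Section~\ref{section:delsarte_lp} applied to $\phi=\phi_m$ and $\Gamma=\Gamma_r$. Condition~\ref{condition_phi} is immediate: since $n-2|x|$ strictly decreases in $|x|$ and $m$ is odd, the map $s\mapsto s^m$ is increasing, so $\phi_m(x)=(n-2|x|)^m-(n-2d)^m\le 0$ exactly when $|x|\ge d$, while $\phi_m(0)=n^m-(n-2d)^m>0$ because $0<n-2d<n$. Oddness of $m$ is essential here: for even $m$ the levels with $|x|>n/2$ would violate the sign condition. The whole difficulty is therefore condition~\ref{condition_f}, namely $\widehat\phi_m * \widehat\Gamma_r\ge\widehat\Gamma_r$ (together with $\widehat\Gamma_r=L_r+L_{r-1}\ge 0$, which is clear).

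To attack condition~\ref{condition_f} I would first recast it as a walk-counting statement. Writing $W(x)\coloneqq n-2|x|=\sum_i(-1)^{x_i}$, multiplication by $W$ in the primal domain corresponds, on the Fourier side, to the adjacency operator $A$ of the Hamming cube, $\widehat{W\cdot f}=A\widehat f$, where $(Ah)(z)=\sum_{z'\sim z}h(z')$. Since convolution in the Fourier domain is dual to pointwise multiplication in the primal domain, the required inequality becomes $\widehat{\phi_m\cdot\Gamma_r}=\big(A^m-(n-2d)^m\mathrm{Id}\big)\widehat\Gamma_r\ge\widehat\Gamma_r$, i.e.
\[
    \big(A^m\widehat\Gamma_r\big)(w)\ \ge\ \big(1+(n-2d)^m\big)\,\widehat\Gamma_r(w)\qquad\text{for all }w\in\cube n .
\]
Because $A^m$ has nonnegative entries and $\widehat\Gamma_r\ge 0$, the inequality is automatic for every $w$ off the support of $\widehat\Gamma_r$, so only $|w|\in\{r-1,r\}$ matter. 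For such $w$ the quantity $\big(A^m\widehat\Gamma_r\big)(w)$ is exactly the number of length-$m$ walks in the cube from $w$ to levels $r-1,r$; since the cube is bipartite and $m$ is odd, from level $r$ only level $r-1$ is reachable and conversely. Thus everything reduces to one estimate: the number of length-$m$ walks from a fixed vertex of level $r$ to all of level $r-1$ (and symmetrically) is at least $1+(n-2d)^m$.

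The core of the argument is a combinatorial lower bound on this walk count with the correct exponential rate. I would organize walks by their weight profile: a walk is a lattice path from $r$ to $r-1$ of length $m$ (the sequence of up/down moves) together with, at each step, a choice of which coordinate to flip --- namely $k$ choices for a down-step and $n-k$ for an up-step from a weight-$k$ vertex. Restricting to paths confined to a band $[r-B,r+B]$ with $\sqrt m=o(B)$ and $B=o(n)$, every down-choice is at least $r-B=r(1-o(1))$ and every up-choice at least $(n-r)(1-o(1))$; since $r$ and $n-r$ are both $\Theta(n)$ under the hypothesis on $r$, the per-step loss is $(1-o(1))$. All confined paths share the same numbers of up- and down-steps, so factoring out their common choice-weight $r^{(m+1)/2}(n-r)^{(m-1)/2}(1-o(1))^m$ leaves the number of length-$m$ lattice paths $r\to r-1$ staying in the band, which is $(1-o(1))\binom{m}{(m-1)/2}\sim 2^m/\sqrt m$ once $B=\omega(\sqrt m)$. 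This yields a lower bound of exponential rate $2\sqrt{r(n-r)}$, and the hypothesis $|r-n/2|\le\sqrt{d(n-d)}-o(n)$ is precisely what makes $4r(n-r)=n^2-4(r-n/2)^2\ge(n-2d)^2$ with room to spare, since $4d(n-d)=n^2-(n-2d)^2$.

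The step I expect to be the main obstacle is the quantitative bookkeeping that turns this rate comparison into the stated pointwise inequality. Three error terms must be controlled: the factor $(1-o(1))^m$ from approximating the choice-counts by $r$ and $n-r$ across the band, the $\mathrm{poly}(m)$ loss hidden in $\binom{m}{(m-1)/2}$, and the additive $1$ on the right. All of these must be absorbed by the exponential gain $\big(2\sqrt{r(n-r)}/(n-2d)\big)^m$, which the slack $\eta n\coloneqq\sqrt{d(n-d)}-|r-n/2|$ in the hypothesis forces to be at least $(1+c\eta)^m$ for a constant $c>0$. Making the gain dominate the losses requires coordinating the three asymptotic regimes --- the slack $\eta$, the band width $B$ (taken $\omega(\sqrt m)\cap o(n)$), and $m=\omega(1)\cap o(n)$ --- so that $c\eta m$ outgrows $O(Bm/n)+O(\log m)$; since $m=o(n)$ these can be chosen compatibly. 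Finally I would treat $|w|=r$ and $|w|=r-1$ symmetrically, the band around either level giving the same rate, to obtain condition~\ref{condition_f} and hence feasibility.
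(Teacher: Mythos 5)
Your proposal is correct and follows essentially the same route as the paper: translate $\widehat{\phi}*\widehat{\Gamma}\ge\widehat{\Gamma}$ into the operator inequality $A^m(L_r+L_{r-1})\ge[(n-2d)^m+1](L_r+L_{r-1})$, use bipartiteness and odd $m$ to reduce to levels $r-1,r$, and lower-bound the walk count by choosing up/down orderings times per-step coordinate choices to get the rate $2\sqrt{r(n-r)}$, which exceeds $n-2d$ exactly under $|r-n/2|\le\sqrt{d(n-d)}-o(n)$. The only (immaterial) difference is that you confine walks to a band of width $B=\omega(\sqrt m)\cap o(n)$ and invoke a lattice-path confinement estimate, whereas the paper simply observes that every length-$m$ walk automatically stays in levels $r\pm m$ with $m=o(n)$, so each step already has $r\pm O(m)$ or $n-r\pm O(m)$ choices.
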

For the rest of the section, we write $\phi,\Gamma$ instead of $\phi_m,\Gamma_r$.
\begin{proof}
Clearly, $\phi$ 
satisfies the first condition above, namely $\phi(x)\leq 0$ if $|x|\geq d$
and $\phi(0) > 0$.
It is also clear that $\widehat{\Gamma} \geq 0$. 
To show that $g = \phi \cdot \Gamma^2$ is feasible,
it remains to prove that
$\widehat{\phi} * \widehat{\Gamma} \geq \widehat{\Gamma}$.

As in \eqref{eq:level_set_indicator}, the indicator of the first level-set 
is denoted by $L_1(x)$. We note that the Fourier transform of $L_1(x)$
is $n-2|x|$. Indeed,
\[
    2^n \langle L_1, \chi_x\rangle 
    = \sum_{y\in\cube{n} : |y|=1} \chi_x(y)
    = \sum_{i=1}^{n}(-1)^{x_i}
    = (n-|x|)\cdot 1 + |x|\cdot (-1)
\] 
Let $A$ be the adjacency matrix of the Hamming cube,
$
    A(x,y) = \1_{[|x+y|=1]}
$.
Convolution with the function $L_1$ is equivalent to multiplication by $A$,
\[
    (L_1 * \widehat{\Gamma})(x) 
    = \sum_{y\in\cube{n} : |y|=1} \widehat{\Gamma}(x+y)
    = \sum_{z: |z+x|=1} \widehat{\Gamma}(z)
    = (A\widehat{\Gamma})(x)
\]
Namely,
$\widehat{\phi}*\widehat{\Gamma} = \left(A^m - (n-2d)^m I\right)\widehat{\Gamma}$.
We will achieve our goal by proving that
\begin{equation}
    A^m (L_r + L_{r-1}) \geq \left[(n-2d)^m + 1\right] (L_r + L_{r-1})
    \label{eq:main_goal}
\end{equation}
But $A^m(x,y)$ is the number of length-$m$ walks from $x$
to $y$ in the Hamming cube, and
\[
    A^m (L_r+L_{r-1})(x) = \sum_{y: |y|\in\{r,r-1\}} A^m(x,y)
\]
This is the number of walks that start at $x$ and terminate
in levels $\{r,r-1\}$.
Since $m$ is odd and the Hamming cube is bipartite, $A^m(x,y)$ is non-zero
only when $|x|$ and $|y|$ have different parity.
If $|x|\notin \{r,r-1\}$, the right-hand side of \eqref{eq:main_goal} 
is $0$ and the left-hand side is non-negative. 

Denote by $P_{r,m,j}$ the number of length-$m$ walks on the Hamming cube
that start at a given vertex in level $r$ and end in level $r+j$. Then,
\begin{equation}
    A^m (L_r+L_{r-1})(x)
    \geq P_{r,m,-1}L_{r}(x) + P_{r-1,m,1} L_{r-1}(x) 
    \label{eq:bound_by_num_walks}
\end{equation}
The following proposition gives an asymptotic estimate of
$P_{r,m,j}$.
\begin{proposition}\label{prop:num_walks}
    Let $r, m$, and $j$ be integers such that
    \begin{itemize}
        \item $r$ is linear in $n$, $r = \Theta(n)$,
        \item $m$ is a slowly growing function of $n$, $\omega(1) \leq m \leq o(n)$,
        \item $|j| = o(m)$ and $m\equiv j\bmod 2$. 
    \end{itemize}
    Then,
    \[
        (P_{r,m,j})^{1/m} = 2\sqrt{r(n-r)} \pm o(n)
    \]
\end{proposition}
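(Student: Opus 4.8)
The plan is to reduce the count $P_{r,m,j}$ to a weighted walk on the line of levels $\{0,1,\dots,n\}$. A length-$m$ walk on the cube starting at a fixed vertex of weight $r$ is specified by choosing, at each step, a coordinate to flip; such a flip raises the weight by $1$ (if the chosen coordinate is $0$) or lowers it by $1$ (if it is $1$). Thus every walk determines a sign pattern $\sigma=(\sigma_1,\dots,\sigma_m)\in\{\pm1\}^m$, and conversely, for a fixed $\sigma$ with partial sums $w_t = r+\sum_{s\le t}\sigma_s$, the number of cube-walks realizing it is $\prod_{t=1}^m c_t$, where $c_t = n-w_{t-1}$ if $\sigma_t=+1$ and $c_t = w_{t-1}$ if $\sigma_t=-1$. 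The walk ends in level $r+j$ exactly when $\sum_t\sigma_t = j$, i.e.\ when $\sigma$ has $u\coloneqq(m+j)/2$ up-steps and $v\coloneqq(m-j)/2$ down-steps, both integers since $m\equiv j \bmod 2$. Hence
\[
    P_{r,m,j} = \sum_{\sigma} \prod_{t=1}^{m} c_t,
\]
the sum ranging over the $\binom{m}{u}$ sign patterns with $u$ up-steps.

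Next I would localize the walk. Since $m = o(n)$, every intermediate level satisfies $|w_t - r|\le m = o(n)$, so $w_t$ stays within $o(n)$ of $r$. Using that both $r$ and $n-r$ are $\Theta(n)$ (as holds in the application, where $|r-n/2|$ stays bounded away from $n/2$, keeping $r$ away from $0$ and $n$), this has two consequences. First, the interval $[r-m,r+m]$ lies strictly inside $(0,n)$, so no factor $c_t$ vanishes and every one of the $\binom{m}{u}$ sign patterns is genuinely realizable — there is no reflection at the boundary levels $0$ or $n$. Second, each up-factor equals $n - w_{t-1} = (n-r)(1\pm o(1))$ and each down-factor equals $w_{t-1} = r(1\pm o(1))$, uniformly in $t$ and $\sigma$, since the relative error is at most $m/\min(r,n-r) = o(n)/\Theta(n) = o(1)$.

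Combining these gives the two-sided bound
\[
    \binom{m}{u}(n-r)^{u} r^{v}(1-o(1))^{m} \le P_{r,m,j} \le \binom{m}{u}(n-r)^{u} r^{v}(1+o(1))^{m}.
\]
Taking $m$-th roots, the error factors collapse to $1\pm o(1)$; the binomial factor obeys $\binom{m}{u}^{1/m}\to 2$ by Stirling, because $u=(m+j)/2$ lies within $o(m)$ of $m/2$ when $|j|=o(m)$; and the weight factor rearranges as $(n-r)^{u/m}r^{v/m} = \sqrt{r(n-r)}\,\big((n-r)/r\big)^{j/(2m)}$. This yields $(P_{r,m,j})^{1/m} = 2\sqrt{r(n-r)}\,(1+o(1)) = 2\sqrt{r(n-r)}\pm o(n)$, as claimed, since $\sqrt{r(n-r)} = \Theta(n)$.

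Rather than a single hard obstacle, the steps needing care are the two asymptotic estimates just invoked. For the binomial factor one must confirm that displacing $u$ from the center by $o(m)$ changes $\binom{m}{u}^{1/m}$ only by a $1+o(1)$ factor, which Stirling (or the entropy bound $\binom{m}{\alpha m}^{1/m}\to 2^{H(\alpha)}$ with $\alpha\to 1/2$) delivers. For the weight factor the tempting move is to bound $(n-r)^{u/m}$ and $r^{v/m}$ separately, but $(n-r)^{j/(2m)}$ alone need not tend to $1$; the correct bookkeeping keeps the symmetric form $\big((n-r)/r\big)^{j/(2m)}$, which does tend to $1$ precisely because the ratio $(n-r)/r$ is bounded (both terms being $\Theta(n)$) while $j/m=o(1)$. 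Finally one must verify the localization claim keeping the walk away from levels $0$ and $n$, since this is what makes the number of realizable sign patterns exactly $\binom{m}{u}$.
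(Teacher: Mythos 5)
Your proof is correct and follows essentially the same route as the paper: decompose each length-$m$ walk into an up/down sign pattern, count the $\binom{m}{(m+j)/2}$ patterns, and bound the multiplicity of each step by $n-r$ or $r$ up to a $1\pm o(1)$ correction using the localization $|w_t-r|\le m=o(n)$. Your bookkeeping of the binomial factor via the entropy bound and of the ratio $\bigl((n-r)/r\bigr)^{j/(2m)}$ is slightly more explicit than the paper's $2^m/m^{O(1)}$ and $1\pm o_n(1)$ shorthand, but the argument is the same.
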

\begin{proof}[Proof of Proposition \ref{prop:num_walks}]
    A vertex at level $k$ has $n-k$ neighbors at level $k+1$, and
    $k$ neighbors at level $k-1$. 
    Therefore, whenever the walk takes a step up
    (i.e., goes to the next higher level) it can do so in
    $n-r \pm O(m)$ ways, since the whole walk is 
    contained in levels $r-m,\ldots,r+m$.
    Likewise, there are $r \pm O(m)$ options at each down step.
    
    A path that starts at level $r$ and ends at level $r+j$ 
    takes $(m+j)/2$ steps up and $(m-j)/2$
    steps down.  
    There are $\binom{m}{(m-j)/2}=2^m/m^{O(1)}$ ways to choose the order of the
    up/down steps. Consider any word in the letters "up" and "down" with $(m+j)/2$
    "up"s and $(m-j)/2$ "down"s. The number of corresponding walks in the cube
    is the product of $(m+j)/2$ terms each of which equals \mbox{$n-r \pm O(m)$},
    and $(m-j)/2$ terms each of which equals $r \pm O(m)$. Therefore,
\[
        P_{r,m,j}  = \frac{2^m}{m^{O(1)}}
            \left[n-r \pm O(m))\right]^{(m+j)/2}
            \left[r\pm O(m))\right]^{(m-j)/2}=
\]
\[
= \left(2\sqrt{r(n-r)}\right)^m 
\left(\frac{n-r \pm O(m)}{n-r}\cdot\frac{r \pm O(m)}{r}\right)^{m/2}
        \left(\frac{n-r \pm O(m)}{r \pm O(m)}\right)^{j/2}
        m^{-O(1)}
\]
    Taking the $m$-th root, we obtain the main term $2\sqrt{r(n-r)}$,
    while the other terms contribute a factor of $1\pm o_n(1)$.
\end{proof}

By \eqref{eq:bound_by_num_walks} and the above Proposition,
\begin{equation*}
    A^m (L_r+L_{r-1})(x)
    \geq \left(2\sqrt{r(n-r)} - o(n)\right)^{m} (L_{r} + L_{r-1})(x) 
\end{equation*}
So \eqref{eq:main_goal} is satisfied if
\[
    2\sqrt{r(n-r)} - o(n) \geq ((n-2d)^m+1)^{1/m}
\]
namely if
\begin{equation}
    |r-n/2| \leq \sqrt{d(n-d)} - o(n)
    \label{eq:condition_on_r}
\end{equation}
Choosing any $r$ in this range guarantees that $\widehat{\phi}*\widehat{\Gamma}\geq \widehat{\Gamma}$, 
and consequently that $g = \phi \cdot \Gamma^2$
is feasible.
\end{proof}

By \eqref{eq:value_bound}, this shows that $A(n,d) \leq \phi_m(0) |\supp (\widehat{\Gamma}_r)| \le 2n^m \binom{n}{r}$,
for $r$ that satisfies \eqref{eq:condition_on_r}.
To make the first factor sub-exponential, we take $m=o(n/\log n)$,
and consequently $A(n,d) \le 2^{nH(r/n) + o(n)}$. The term $\binom{n}{r}$
is smaller the further from $n/2$ that $r$ is. So, we choose $r = n/2 - \sqrt{d(n-d)}+o(n)$
in order to minimize the bound and conclude that
\[
    A(n,\lfloor \delta n \rfloor)
    \leq 2^{n H(1/2 - \sqrt{\delta(1-\delta)}) + o(n)}
\]
which completes the proof of \eqref{eq:first_lp_bound}.

\section{Related Work}

There are several proofs of first LP bound
in the literature 
\cite{mceliece1977new, friedman2005generalized, navon2005delsarte, navon2009linear, samorodnitsky2021one, barg2006spectral, barg2008functional}. These various proofs
originate from different perspectives, and each of them
offers new interpretations and insights. 
Essentially, they all follow the same pattern
that we introduced in Section \ref{section:delsarte_lp},
with
\[
    \phi_{MRRW}(x) = 2(d-|x|)
\]
and $\Gamma$ in the spirit of
\[
    \Gamma_{MRRW}(x) = \sum_{j=0}^{r} \binom{n}{j}^{-1} K_{j}(d)K_j(|x|)
\]
where $\{K_0,\dots,K_n\}$ are the Krawtchouk polynomials. Here
\begin{equation}
    K_{j}(t) = \sum_{i=0}^{n}(-1)^{i} \binom{t}{j}\binom{n-t}{k-j}
    \quad
    t \in \R
    \label{eq:krawtchouk}
\end{equation}
is the Fourier transform of the indicator function of the $j$-th level of the cube.

The proof of \cite{mceliece1977new} uses the Christoffel-Darboux
formula from the theory of orthogonal polynomials. The parameter $r$
is determined so that the first root of the 
$r$-th Krawtchouk is at least $n-2d$, which leads, in turn to 
\eqref{eq:condition_on_r}. In \cite{friedman2005generalized,navon2005delsarte,navon2009linear,samorodnitsky2001optimum},
$\widehat{\Gamma}_{MRRW}$ 
is chosen as the eigenfunction
of the Hamming ball of radius $r$ that corresponds to the largest 
eigenvalue (the Perron eigenfunction). 
In \cite{barg2006spectral,barg2008functional}, a function close to
$\Gamma_{MRRW}$ is derived using the Perron-Frobenius theorem and a 
functional-analytic perspective.

\section*{Acknowledgements}
We are thankful to Leonardo Nagami Coregliano, Fernando Granha Jeronimo and Chris Jones for insightful discussions.
We also thank Alex Samorodnitsky for 
helpful advice and inspiration.

\bibliographystyle{plain}
\bibliography{refs}

\end{document}